\title{Discretized Approaches to Schematization%
\thanks{Part of the results appeared in the author's PhD thesis \cite{Meulemans-2014} and supersede the results in arXiv:1306.2827 \cite{Meulemans-2013}. In particular, it provides a much simpler proof for Theorem~\ref{thm:main}, allowing for stronger implications.}
} 
\author{Wouter Meulemans%
\thanks{giCentre, City University London, London, United Kingdom. \texttt{wouter.meulemans@city.ac.uk}}
}
\newcommand{\etal}{{et al.}}
\newcommand{\f}{Fr\'echet }
\newcommand{\C}{\ensuremath \mathcal{C}}
\newcommand{\bd}{\ensuremath \partial}
\newcommand{\df}{\ensuremath d_\textrm{F}}
\newcommand{\ds}{\ensuremath d_\textrm{SD}}
\newcommand{\eps}{\ensuremath \varepsilon}
\newcommand{\R}{\ensuremath\mathbb{R}}
\newtheorem{problem}{Problem}
\newtheorem{corollary}{Corollary}
\newcommand{\myparNS}[1]{\noindent{\sffamily\bfseries #1.}}
\newcommand{\mypar}[1]{\smallskip\myparNS{#1}}
\begin{document}

\maketitle

\begin{abstract}
To produce cartographic maps, simplification is typically used to reduce complexity of the map to a legible level. 
With schematic maps, however, this simplification is pushed far beyond the legibility threshold and is instead constrained by functional need and resemblance.
Moreover, stylistic geometry is often used to convey the schematic nature of the map.
In this paper we explore discretized approaches to computing a schematic shape $S$ for a simple polygon $P$.
We do so by overlaying a plane graph $G$ on $P$ as the solution space for the schematic shape.
Topological constraints imply that $S$ should describe a simple polygon.
We investigate two approaches, \emph{simple map matching} and \emph{connected face selection}, based on commonly used similarity metrics.

With the former, $S$ is a simple cycle $C$ in $G$ and we quantify resemblance via the \f distance.
We prove that it is NP-hard to compute a cycle that approximates the minimal \f distance over all simple cycles in a plane graph $G$. 
This result holds even if $G$ is a partial grid graph, if area preservation is required and if we assume a given sequence of turns is specified.

With the latter, $S$ is a connected face set in $G$, quantifying resemblance via the symmetric difference. 
Though the symmetric difference seems a less strict measure, we prove that it is NP-hard to compute the optimal face set.
This result holds even if $G$ is full grid graph or a triangular or hexagonal tiling, and if area preservation is required.
Moreover, it is independent of whether we allow the set of faces to have holes or not.
\end{abstract}

\section{Introduction}

Cartographic maps are an important tool for exploring, analyzing and communicating data in their geographic context.
Effective maps show their main information as prominently as possible.
Map elements that are not relevant to the map's purpose are abstracted or fully omitted.
In \emph{schematic maps}, the abstraction is taken to ``extreme'' levels, representing complex geographic elements with only a few line segments.
In addition to highlighting the main aspects of the map, they are useful to avoid an ``illusion of accuracy'' \cite{Monmonier-1996} which may arise when showing data on a detailed map: the schematic appearance acts as a visual cue of distortion, imprecision or uncertainty.
However, the low complexity must be balanced with recognizability.
Correct topological relations and resemblance hence play a key role in schematization.
Schematic maps also tend to be stylized by constraining the permitted geometry.
Orientations of line segments are often restricted to a small set $\C$, so called \emph{$\C$-oriented schematization}.
The prototypical example is a schematic transit map (e.g. the London Tube Map), in which all segments are horizontal, vertical or a $45$-degree diagonal.

\begin{figure}[t]
  \centering
  \includegraphics{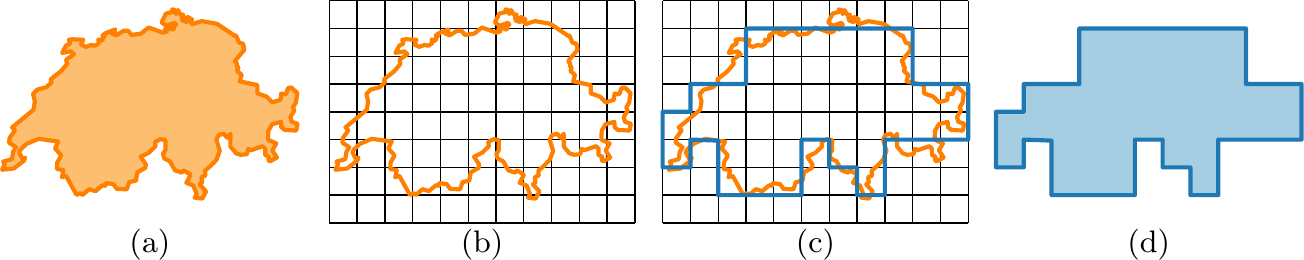}
  \caption{Discretized schematization. (a) Simple polygon $P$ to be schematized (Switzerland). (b) A $\C$-oriented (rectilinear) graph $G$ is placed on $P$. (c) A \emph{simple} polygon $S$ with its boundary  constrained to $G$ that resembles $P$. (d) Shape $S$ is a $\C$-oriented schematization of the input.}
  \label{fig:idea}
\end{figure}

A central problem in schematization is the following: given a simple polygon $P$, compute a simple $\C$-oriented polygon $S$ with low complexity and high resemblance to $P$. 
Typically, one is constrained to optimize the other.
Formalizing ``high resemblance'' requires the use of similarity measures, each having its own benefits and weaknesses \cite{BuchinMRS-2015}.
In this paper we investigate a discretized approach to schematization.
To this end we overlay a $\C$-oriented plane graph $G$ on $P$, and require the boundary of $S$ to coincide with a simple cycle in $G$ (Fig.~\ref{fig:idea}).
Though it restricts the solution space, this approach also offers some benefits. 
\begin{itemize}
\item The graph can easily model a variety of constraints, possibly varying over the map, or mixing with other types of geometry, such as circular arcs.
\item Discretization promotes the use of collinear edges and provides a uniformity of edge lengths. This provides a stronger sense of schematization and a more coherent ``look and feel'' when schematizing multiple shapes.
\item Combined with the simplicity constraint, the graph enforces a minimal width for narrow strips in $P$, avoiding an undesired visual collapse (see Fig.~\ref{fig:exaggeration}).
\end{itemize}
Finally, discretization may be necessary due to the intended use of the schematic shape. 
Examples include computing tile-based cartograms \cite{CanoBCPSS-2015} and deciding for a grid map \cite{EppsteinKSS-2013,SlingsbyDW-2010} on a connected set of cells that resemble the complete region.
These use cases require us to bound not (only) the bends of the result, but the number or total size of the enclosed faces. 
This is also relevant in the context of area-preserving schematization \cite{BuchinMRS-2015}.

\begin{figure}[t]
\centering
\includegraphics{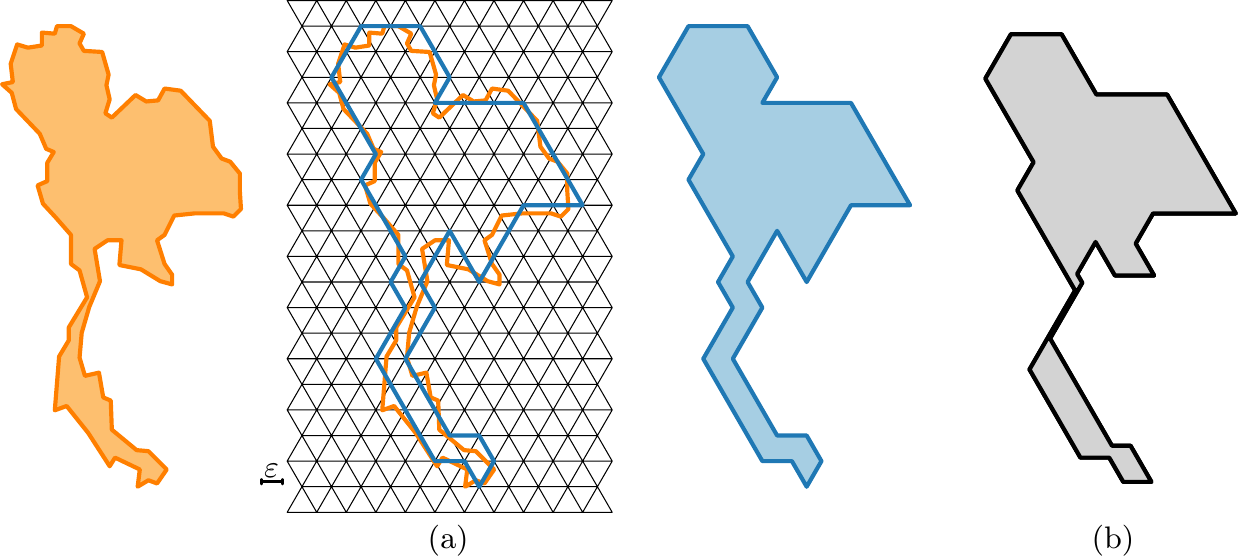}
\caption{(a) A nonintersecting schematization of Thailand with \f distance at most $\varepsilon$ exaggerates the narrow strip. Computed using techniques described in \cite{Meulemans-2014}. (b) Result of algorithm by Buchin~\etal~\cite{BuchinMRS-2015} contains a visual collapse.}
\label{fig:exaggeration}
\end{figure}

We consider two discretized approaches, as outlined below.
Note that we focus on grid graphs (plane graphs with horizontal and vertical edges only) without constraining the complexity of the result.
Refer to Section~\ref{sec:prelims} for precise definitions used in this paper.

\mypar{Simple map matching}
The first approach aims to find a simple cycle in $G$, quantifying resemblance via the \f distance ($\df$).
This leads us to the problem statement below.

\begin{problem}[Simple map matching]
\label{prob:SMM}
Let $G$ be a partial grid graph, let $P$ be a simple polygon and let $\eps > 0$.
Decide whether a simple cycle $C$ in $G$ exists such that $\df(C,P) \leq \eps$.
\end{problem}

In Section~\ref{sec:proof} we prove that this problem is NP-complete. 
This proof has implications on several variants of this problem (Section~\ref{sec:corollaries}). 
In particular, no reasonable approximation algorithm exists, unless P$=$NP.

\mypar{Connected face selection}
With the second approach we use the symmetric difference ($\ds$) to quantify resemblance.
We now consider the full polygon rather than only its boundary, and look for a connected face set in $G$ instead of a cycle. 
This leads to the following problem: 

\begin{problem}[Connected face selection]
\label{prob:CFS}
Let $G$ be a full grid graph, let $P$ be a simple polygon and let $D > 0$.
Decide whether a connected face set $S$ in $G$ exists such that $\ds(S,P) \leq D$.
\end{problem}

In spite of the symmetric difference being insensitive to matching different parts between polygons, we prove that this is again an NP-complete problem in Section~\ref{sec:cfs}.
This result is independent of whether we allow holes in the resulting polygon.
Moreover, the proof readily implies hardness for regular tilings using triangles or hexagons.

\subsection{Related work}

\myparNS{Schematization}
Line and network schematization (e.g. transit maps) have received significant attention in the algorithmic literature, e.g. \cite{CabelloBK-2005,DellingGNPI-2014,FinkHNRSW-2013,NoellenburgW-2011}. 
Recently, schematization of geographic regions has gained increasing attention, e.g. \cite{BuchinMRS-2015,DijkGHMS-2014-giscience,GoethemMSW-2015}.
Our discretized approach is similar in nature to the octilinear schematization technique of Cicerone and Cermignani~\cite{CiceroneC-2012}, though simplicity is of no concern in their work.
As mentioned in the introduction, the discretized approach offers conceptual advantages over the existing nondiscretized methods.

\mypar{Map matching}
Map matching has various applications in geographic information systems, such as finding a driven route based on a road network and a sequence of GPS positions.
Without its simplicity constraint, it has been studied extensively with various criteria of resemblance, e.g. \cite{AltERW-2003,BrakatsoulasPSW-2005,HaunertB-2012,WenkSP-2006}.
Alt~\etal~\cite{AltERW-2003} describe an algorithm that solves nonsimple map matching under the \f distance in $O(mn \log n)$ time where $m$ is the complexity of $P$ and $n$ the complexity of $G$. 
Though ``U-turns'' can be avoided, no general simplicity guarantees are possible.
Similarly, the decision problem for the weak \f distance can be solved in $O(mn)$ time~\cite{BrakatsoulasPSW-2005}.
The \emph{simple} map matching problem on the other hand has received little attention, although it stands to reason that for many applications a nonselfintersecting result is desired, if the input curve is simple.
A full grid graph always admits a solution with Hausdorff distance at most $3\sqrt{2}/2$ and \f distance at most $(\beta + \sqrt{2})/2$, where $\beta$ parametrizes a realistic input model \cite{BoutsKKMSV-2016a,BoutsKKMSV-2016b}. 
Wylie and Zhu~\cite{WylieZ-2014} prove independently that simple map matching under the \emph{discrete} \f distance is NP-hard, however, without requiring a simple input curve.
A stronger result---with a simple input curve---follows directly from our proofs.
Sherette and Wenk \cite{SheretteW-2013} show that it is NP-hard to find a simple curve with bounded \f distance on a 2D surface with holes or in 3D, but again without requiring a simple input curve.

\mypar{Face selection}
In its dual form, connected face selection is a specialization of the known NP-hard maximum-weight-connected-subgraph problem \cite{ElKebirK-2014,Johnson-1985} in which a (planar) graph must be partitioned into two disjoint components, $S$ and $S'$, such that $S$ is connected and has maximal total weight.
Our results readily imply that this dual problem remains NP-hard even on a full grid graph if all weights are nonnegative and the size of $S$ is given; this is independent of whether we constrain $S'$ to be connected.
It is also related to the well-studied graph cuts, though focus there lies minimizing the number of cut edges connecting $S$ and $S'$ (e.g. \cite{FeldmannW-2015}).
Vertex weights have been included only as part of the optimization criterion (e.g. \cite{AroraRV-2009,ParkP-1993}), the other part still being the number of cut edges. 
The number of cut edges is not correlated to the complexity of the eventual shape: we cannot use this as a trade-off between complexity and resemblance.
Also, these approaches tend not to require that a partition is connected, and focus on ensuring a certain balance between the two partition sizes. 
Partitioning an unweighted nonplanar graph into connected components that each contain prescribed vertices is known to be NP-hard \cite{HofPW-2009}.

\subsection{Preliminaries}
\label{sec:prelims}

\myparNS{Polygons}
A polygon $P$ is defined by a cyclic sequence of vertices in $\R^2$.
Each pair of consecutive vertices is connected by a line segment (an edge).
A polygon is \emph{simple} if no two edges intersect, except at common vertices.
We use $|P|$ to refer to the area of polygon $P$.
The \emph{complexity} of a polygon is its number of edges.
We use $\bd P$ to refer to the boundary of $P$.
Unless mentioned otherwise, polygons are assumed to be simple throughout this paper.

\mypar{Graphs}
A straight-line graph $G = (V,E)$ is defined by a set of vertices $V$ in $\R^2$ and edges $E \subset V \times V$ connecting pairs of vertices using line segments.
The graph is \emph{plane} if no two edges intersect, except at common vertices.
The \emph{complexity} of a plane graph is its number of edges.
We call a plane graph a \emph{(partial) grid graph} if all of its vertices have integer coordinates, and all edges are either horizontal or vertical, having length at least $1$. 
A \emph{full} grid graph is a maximal grid graph (in terms of both vertices and edges) within some rectangular region; all edges have length $1$.
A full grid graph represents a tiling of unit squares.
Unless mentioned otherwise, graphs are assumed to be plane in this paper.

\mypar{Cycles}
A \emph{cycle} in a graph is a sequence of vertices such that every consecutive pair as well as the first and last vertex in the sequence are connected via an edge.
A cycle is \emph{simple} if the sequence does not contain a vertex more than once.
A simple cycle in a plane graph corresponds to (the boundary of) a simple polygon.
The \emph{bends} of a cycle are the vertices at which the interior angle is not equal to $\pi$, that is, those that form corners in the polygon it represents.
The \emph{complexity} of a cycle is its number of bends.

\mypar{\f distance}
The \f distance quantifies the (dis)similarity between two geometric shapes: a high \f distance indicates a low similarity.
We define $B_P \colon S^1 \rightarrow \bd P$ as the continuous function that maps the unit circle onto the boundary of $P$.
Let $\Psi$ denote the set of all orientation-preserving homeomorphisms on $S^1$.
Using $\| \cdot \|$ to denote the Euclidean distance, the \f distance between two polygons is defined as
$\df(P,Q) = \inf_{\psi \in \Psi} \max_{t\in S^1} \| B_P(t) - B_Q(\psi(t)) \|.$

\mypar{Faces}
Maximal empty regions of a graph (i.e., not containing a vertex in its interior) are referred to as \emph{faces}.
One face, the outer face, is infinite; all other faces have bounded area.
Faces are said to be adjacent if they share at least one edge. 
The dual $G^*$ of graph $G$ is defined by a vertex set containing a dual vertex for every face and two dual vertices are connected via a dual edge if the corresponding faces are adjacent.
A face set in $G$ is said to be \emph{connected} if the corresponding induced subgraph of $G^*$ is connected.
We call a face set \emph{simply connected} if the complement of the face set is also connected.
A simply connected face set corresponds to a simple polygon; a (nonsimply) connected face set may have holes.

\mypar{Symmetric difference}
The symmetric difference between two polygons $P$ and $Q$ is defined as the area covered by precisely one of the polygons. 
That is, the symmetric difference is 
\[\ds(P,Q) = |(P \cup Q) \backslash (P \cap Q)| = |P \cup Q| - |P \cap Q| = |P| + |Q| - 2 \cdot |P \cap Q|.\] 
If $Q = Q_1 \cup \ldots \cup Q_k$ is composed of pairwise disjoint regions $Q_i$ (e.g. a face set), then we may decompose the above formula into $|P| + \sum_{i=1}^k (|Q_i| - 2 |P \cap Q_i|)$.

\section{Simple map matching is NP-complete}
\label{sec:proof}

In this section we consider Problem~\ref{prob:SMM}, simple map matching.
We prove that this problem is NP-complete, as formalized in the following theorem.
\begin{theorem}\label{thm:main}
Let $G$ be a partial grid graph, let $P$ be a simple polygon and let $\eps > 0$.
It is NP-complete to decide whether $G$ contains a simple cycle $C$ with $\df(C,P) \leq \eps$.
\end{theorem}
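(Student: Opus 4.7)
My plan for proving Theorem~\ref{thm:main} is to handle membership in NP and NP-hardness separately. Membership is routine: a simple cycle $C$ in $G$ is a polynomial-size certificate, and the condition $\df(C,P) \le \eps$ can be decided in polynomial time via the classical Alt--Godau algorithm adapted to closed curves (treating $\bd P$ and $C$ as images of $S^1$).

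For hardness, I would reduce from a problem with a natural \emph{combinatorial choice plus routing} structure. A Hamiltonian-cycle style reduction, for instance from Hamiltonian cycle in grid graphs (known NP-complete by Itai, Papadimitriou and Szwarcfiter), is the most natural source, as it already lives in the partial grid setting required by the statement; a reduction from planar 3-SAT via explicit gadgets would be a reasonable alternative. Given an instance of the source problem, I would build in polynomial time a partial grid graph $G$, a simple polygon $P$, and a threshold $\eps$ so that $G$ contains a simple cycle $C$ with $\df(C,P) \le \eps$ if and only if the source instance is a yes-instance.

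The reduction would rely on gadgets in which a simple cycle must commit to one of several local choices (a truth value, or which corridor at a vertex-gadget to use). The polygon $P$ would be drawn so that $\eps$ is tight enough to force any admissible cycle to thread through every gadget, and the simplicity constraint would then prevent the cycle from ``using both sides'' of any choice gadget. This is crucially where simplicity does the work: without it, map matching under the \f distance is polynomial-time solvable by the algorithm of Alt~\etal, so any NP-hardness proof must arrange that the unique way to beat $\eps$ is to make globally consistent choices, something a nonsimple cycle can sidestep by doubling back. The tight $\eps$ simultaneously rules out shortcuts that bypass gadgets, so the only qualifying cycles are those encoding a valid solution.

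The main obstacle I expect to confront is the interaction between the \f parameterization and the combinatorial choices. The \f distance requires a single orientation-preserving homeomorphism on $S^1$, so $\bd P$ and $C$ must agree on a global cyclic order of features. This means the cyclic layout of gadgets along $\bd P$ must be compatible with the order in which every valid cycle traverses them in $G$. Handling this cleanly---for instance, by designing gadgets that are locally wide enough to absorb small reorderings of visits while remaining tight near their choice points to forbid cheating---is the delicate part. I suspect the simpler proof hinted at in the paper's footnote achieves this by keeping $P$ extremely thin along long ``backbone'' segments (so that the parameterization is effectively pinned and the cycle is funneled along $G$ without slack) while placing all of the interesting combinatorics into small local pockets where simplicity alone determines the admissible traversals, so that the reduction collapses into a clean logical equivalence.
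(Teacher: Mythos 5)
Your NP-membership argument matches the paper's (Alt--Godau plus a simplicity check), and your secondary suggestion---a gadget reduction from planar 3-SAT---is indeed the route the paper takes (specifically, planar \emph{monotone} 3SAT in the fixed rectilinear embedding of De Berg and Khosravi, which is what makes the whole construction realizable as a partial grid graph). However, the proposal stops at the level of a plan: for an NP-hardness proof of this kind the gadgets \emph{are} the proof, and none are constructed. Your primary suggestion, reducing from Hamiltonian cycle in grid graphs, is moreover likely a dead end for exactly the reason you flag later: the \f distance uses a single orientation-preserving homeomorphism of $S^1$, so $P$ pins the cyclic order in which any admissible cycle visits regions of the plane. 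In a Hamiltonian-cycle instance the visiting order is precisely the unknown, so you cannot draw a single simple $P$ that is simultaneously within $\eps$ of every candidate Hamiltonian cycle without also being within $\eps$ of many non-Hamiltonian ones.

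The key idea you are missing is the mechanism by which gadgets communicate. In the paper, each gadget has a \emph{local curve} (a piece of $P$) and a \emph{local graph} (a piece of $G$); the local curve only forces that gadget's portion of the cycle to be one of a few admissible \emph{local paths}, and all inter-gadget interaction happens through edges and vertices \emph{shared between the local graphs}. Simplicity forbids reusing a shared vertex or edge, so a local path that ``claims'' a shared port exerts \emph{pressure} on the neighboring gadget, forcing it into a complementary choice; propagation gadgets carry this pressure along corridors from variables to clauses, and the clause gadget is designed to admit a local path iff at least one port is pressure-free. The polygon $P$ is then obtained by stitching the per-gadget local curves together in an essentially arbitrary order compatible with planarity---there is no globally thin ``backbone'' pinning the parameterization, contrary to your guess; the \f constraint is used only locally, gadget by gadget, and the global combinatorics is carried entirely by the simplicity constraint on shared graph elements. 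Without specifying gadgets realizing this (or an equivalent) mechanism, the reduction is not yet established.
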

The problem is in NP since the \f distance can be computed in polynomial time \cite{AltG-1995} and it is straightforward to check simplicity.
In this section we prove that the problem is also NP-hard.
We conclude this section by considering the implications of this result on variants of this problem.
We assume $\eps = 1$ in the remainder of this section.

\subsection{Reduction overview}

De Berg and Khosravi~\cite{BergK-2010} prove that \emph{planar monotone 3SAT} is NP-complete. That is, decide whether a 3CNF formula $F$ is satisfiable, given the following constraints (Fig.~\ref{fig:planar3sat}(a)):
\begin{itemize}
\item clauses are either positive (only unnegated literals) or negative (only negated literals);
\item a planar embedding for $F$ is given, representing variables and clauses as disjoint rectangles;
\item variables lie on a single horizontal line;
\item positive clauses lie above the variables, negative clauses below;
\item links connecting clauses to the variables of their literals are strictly vertical.
\end{itemize}

\begin{figure}[h]
\centering
\includegraphics{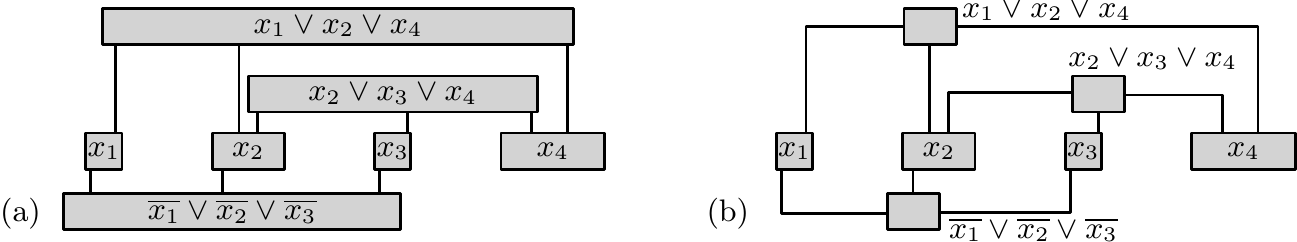}
\caption{(a) Planar monotone 3SAT instance \cite{BergK-2010}. (b) Fixed dimensions for clauses.}
\label{fig:planar3sat}
\end{figure}
\begin{figure}[b]
\centering
\includegraphics{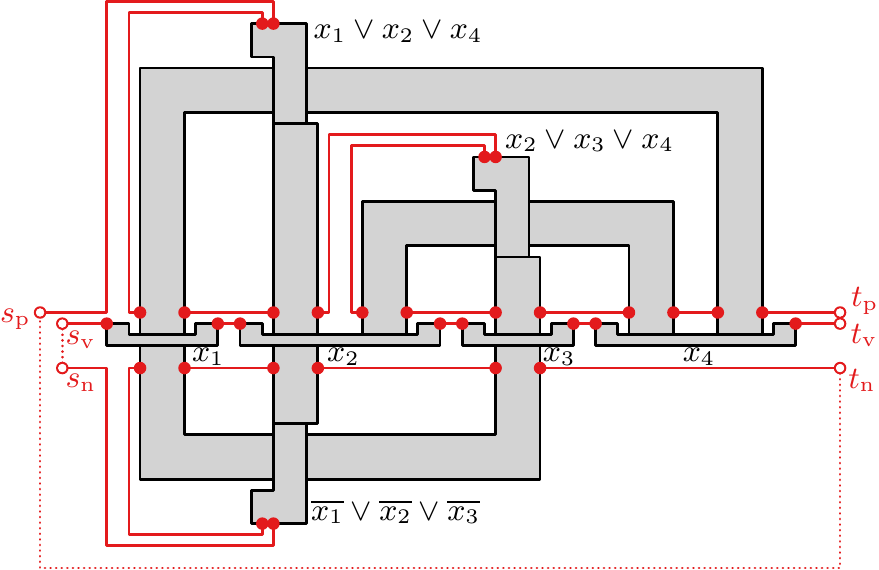}
\caption{Construction sketch for the formula in Fig.~\ref{fig:planar3sat}. Gray polygons represent gadgets, interact ingvia shared boundaries. The red lines connect the various gadgets to obtain a simple polygon.}
\label{fig:constructionoverview}
\end{figure}

For our reduction we construct a simple map matching instance---a partial grid graph $G$ and a polygon $P$---such that $G$ contains a simple cycle $C$ with $\df(C,P) \leq 1$ if and only if formula $F$ is satisfiable.
We use three types of gadgets to represent the variables, clauses and links of $F$: \emph{variable gadgets}, \emph{clause gadgets}, and \emph{propagation gadgets}.
Our clause gadgets have small fixed dimensions and hence cannot be stretched horizontally.
A single bend for the two ``outer'' edges of a clause is sufficient to ensure this (see Fig.~\ref{fig:planar3sat}(b)).
In the upcoming sections we first design the gadgets in isolation before completing the proof using the gadgets.
An overview of the eventual result of this construction is given in Fig.~\ref{fig:constructionoverview}.

\subsection{Gadgets}
\label{ssec:gadgets}

Each gadget specifies a \emph{local graph} (a part of $G$) and a \emph{local curve} (a part of $P$).
The gadgets interact via vertices and edges shared by their local graphs.
There is no interaction based on the local curve: it is used only to force choices in using edges of the local graph.

\mypar{Pressure}
If a cycle exists in the complete graph, a \emph{local path} in the local graph must have a \f distance of at most $1$ to the local curve.
The local path ``claims'' its vertices and edges: these can no longer be used by another gadget.
This results in \emph{pressure} on the other gadget to use a different path.
A gadget has a number of pressure \emph{ports}.
A port corresponds to a sequence of edges in the local graph that may be shared with another gadget.
A port may \emph{receive} pressure, indicating that the shared edges and vertices may not be used in the gadget.
Similarly, it may \emph{give} pressure, indicating that the shared edges and vertices may not be used by any adjacent gadget.
All interaction between gadgets goes via these ports.

\mypar{Gates}
The local curves must be joined carefully to ensure that $P$ is simple.
To this end, each gadget has two curve \emph{gates} that correspond to the endpoints of the local curve.
Later, we show how to connect these gates to create a single simple polygon $P$.

\mypar{Specification}
In the following paragraphs we describe the three gadgets.
In particular, we take note of the \emph{specification} of each gadget: its behavior in terms of its ports; a bounding polygon that contains the local graph and local curve; the placement of its two gates and its ports.
These specifications capture all necessary aspects to complete the reduction in Section~\ref{ssec:proof}.
Here we focus on positive clauses (i.e., above the variables) and their edges.
Gadgets below are defined analogously, by mirroring vertically.

We present specifications and constructions mostly visually, using the following encoding scheme.
The bounding polygon is given with a black outline; the ports are represented with thick green lines; the gates are represented with red dots.
The local graph is given with thick light-blue lines; the local curve is a red line.
We indicate various local paths using dark-blue lines; ports that give pressure for a local path are indicated with an outward arrow.
All elements are visualized on an integer grid (thin gray lines), to show that we indeed construct a partial grid graph.
All coordinates are an integer multiple of a half: all vertices are placed on vertices of the grid, exactly halfway an edge or exactly in the center of a cell.

\mypar{Clause gadget}
A clause gadget is illustrated in Fig.~\ref{fig:clause_gadget}.
It has fixed dimensions; the figure precisely indicates its specification as well as its construction.
The gadget admits a local path only if one of its ports does \emph{not} receive pressure.
Any local path causes pressure on at least one port; for each port there is a path that causes pressure only on that port.
The lack of external pressure on a port indicates that the value of the corresponding variable satisfies the clause.
There is no local path that avoids all three ports: if all ports receive pressure, none of the variables satisfies the clause and the gadget does not admit a local path.

\begin{figure}[b]
\centering
\includegraphics{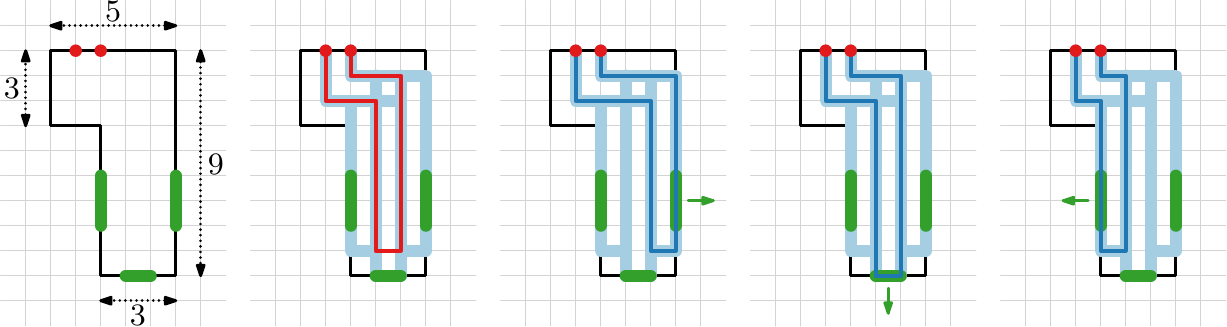}
\caption{The specification (first figure) and construction (second figure) of a clause gadget. Last three figures illustrate local paths; each gives pressure on at least one port.}
\label{fig:clause_gadget}
\end{figure}

\pagebreak
\mypar{Variable gadget}
The specification and construction of a variable gadget depend on the number of literals.
Let $k$ denote the maximum of the number of positive and the number of negative literals of the variable.
We assume that $k > 0$: otherwise, it does not occur in any clause.
Its bounding polygon, gates and ports are illustrated in Fig.~\ref{fig:var_placement} for $k = 2$.
For higher values of $k$, we increase the width by $8$, to ensure a port of width $2$ and a distance of $6$ in between ports.
The gadget admits exactly two local paths: ``\emph{true}'' and ``\emph{false}''.
Ports for positive literals (top side) give pressure only with the \emph{false} path.
Ports for negative literals (bottom side) give pressure only with the \emph{true} path.
In other words, a port gives pressure if the variable does \emph{not} satisfy the corresponding clause.

\begin{figure}[t]
\centering
\includegraphics{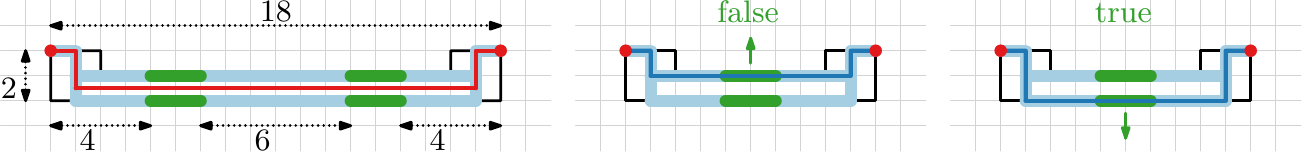}
\caption{The specification and construction of a variable gadget with $k = 2$. The right two figures indicate the two local paths, for $k = 1$.}
\label{fig:var_placement}
\end{figure}

\mypar{Propagation gadget}
A propagation gadget (shown in Fig.~\ref{fig:prop_gadget}) connects a port of a variable gadget to a port of a clause gadget.
The bounding polygon is a corridor of width $4$ with at most one bend: if the link in formula $F$ has a bend, then the gadget also has a bend.
The corridor can have any integer height $h$ greater than $7$.
If it has a bend, the corridor spans any integer width $w$ at least $6$.
The two ports and gates of the gadget are placed as indicated.

\begin{figure}[t]
\centering
\includegraphics{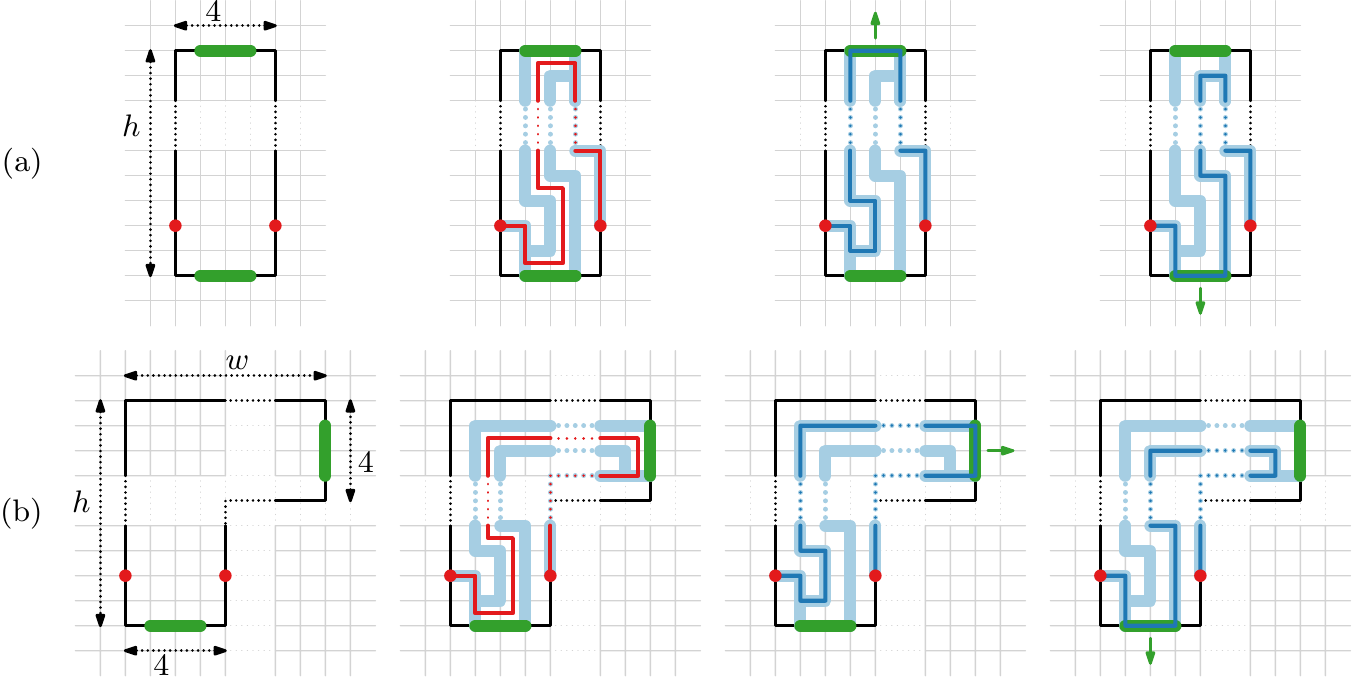}
\caption{The specification (first column) and construction (second column) of a propagation gadget. Last two columns illustrate local paths; each gives pressure on exactly one port. Dotted parts can be stretched to obtain arbitrary dimensions. (a) No bend. (b) Right bend.}
\label{fig:prop_gadget}
\end{figure}

The local graph and curve are constructed such that it admits only two local paths; each puts pressure on exactly one port.
The gadget does not admit a path if both ports receive pressure.
If one port receives pressure, the other must give pressure: it propagates pressure.

\subsection{Construction with gadgets}
\label{ssec:proof}

We are now ready to construct graph $G$ and polygon $P$ based on formula $F$.
Fig.~\ref{fig:constructionoverview} illustrates this construction.
First, we place all variable gadgets next to one another, in the order determined by $F$, with a distance of $2$ in between consecutive variables.

Using the $y$-coordinates in the embedding of $F$, we sort the positive clauses to define a \emph{positive order} $\langle c_1, \ldots c_k \rangle$.
We place the gadget for clause $c_j$ at a distance $7 + 14 j$ above the variables.
Analogously, we use a \emph{negative order} to place the negative clauses below the variables.
Horizontally, the clause gadgets are placed such that the bottom port lines up with the appropriate port on the variable gadget of the middle literal.

Finally, we place a propagation gadget for each link in $F$ to connect the clause and variable gadgets.
By placement of the clauses, any propagation gadget has height at least $7$ and a link without a bend can be represented by a propagation gadget without any bends.
As ports are at least distance $6$ apart, the width of a propagation gadget with a bend exceeds $6$.
A propagation gadget does not overlap other gadgets: the placement of clauses would then imply that the provided embedding for $F$ is not planar.

\mypar{Connecting gadgets}
We have composed the various gadgets in polynomial time.
However, we do not yet have a simple polygon.
We must ``stitch'' the local curves together (in any order) to create polygon $P$.
To this end we first define three subcurves: $P_\text{v}$ for the variable gadgets; $P_\text{p}$ for the positive clause gadgets and their propagation gadgets; and $P_\text{n}$ for the negative clause gadgets and their propagation gadgets. 
Below is a detailed description of how these are constructed. 
Fig.~\ref{fig:constructionoverview} visually illustrates the result.

For $P_\text{v}$ we first define point $s_\text{v}$ and $t_\text{v}$ at distance $4$ outward from the leftmost and rightmost variable gadget respectively, both at the height of the gates.
We connect $s_\text{v}$ to the left gate of the leftmost variable, connect the matching gates of consecutive variables, and then connect the right gate of the rightmost variable to $t_\text{v}$.

Subcurve $P_\text{p}$ is constructed by defining a points $s_\text{p}$ and $t_\text{p}$, similarly as for $P_\text{v}$, but at the height of the ``positive'' propagation gadgets; $s_\text{p}$ is placed at distance $6$ instead.
Analogous to $P_\text{v}$, we first create the straight traversal from $s_\text{p}$ to $t_\text{p}$ through all positive propagation gadgets.
We then include each positive clause in this subcurve, right before it ``enters'' the propagation gadget for its leftmost literal.
This is done by going up starting at distance $3$ before this gadget to two above the top side of the clause gadget, and then connecting to its right gate.
We go back from its left gate, and go down at distance $1$ before the propagation gadget.
Now, $P_\text{p}$ traverses all positive clauses and their propagation gadgets.
Subcurve $P_\text{n}$ is constructed analogous to $P_\text{p}$, though $s_\text{p}$ is placed again at distance $4$.

By placement of the gadgets, these subcurves are simple polygonal curves between their respective endpoints ($s_*$ and $t_*$) and do not intersect each other. 
To obtain a single simple polygon, we must now connect the endpoints of the three subcurves.
We connect $t_\text{v}$ to $t_\text{p}$ and $s_\text{v}$ to $s_\text{n}$ using vertical segments. 
We connect $t_\text{n}$ to $s_\text{p}$, by routing an edge at distance $2$ below the lowest negative clause.

We now have a simple polygon $P$; we define $G$ as the union of all local graphs and the parts of $P$ not contained in a gadget.

\mypar{Proving the theorem}
We now have constructed graph $G$ and polygon $P$.
We must argue that the complexity is polynomial and that $F$ is satisfiable if and only if a simple cycle $C$ exists in $G$ with $\df(C,P) \leq 1$.

Let $n$ denote the number of variables, and $m$ the number of clauses in $F$.
The width of the construction is at most $10 + \sum^n_{i=1} (2 + 8 k_i) + 2 (n - 1)$ where $k_i$ is the number of occurrences of the $i$\textsuperscript{th} variable.
As $\sum^n_{i=1} k_i = 3 m$, the width of the construction is $O(n+m)$.
The height of the construction is at most $2 + (7 + 14 m_\text{p} + 9 + 2) + (7 + 14 m_\text{n} + 9 + 4)$, where $m_\text{p}$ is the number of positive clauses and $m_\text{n}$ the number of negative clauses; since $m_\text{p} + m_\text{n} = m$, the height is $O(m)$.
As all coordinates are required to be an integer multiple of a half, this implies a polynomial bound on the complexity of $G$ and $P$.

Assume that $F$ is satisfiable and consider some satisfying assignment.
We argue the existence of a simple cycle $C$.
For each variable gadget we choose either the ``true'' or ``false'' local path, matching the assigned truth value.
This gives pressure on a number of propagation gadgets: we choose the only remaining local path for these, causing pressure on the corresponding clauses.
For the other propagation gadgets, we choose the path such that it gives pressure at the variable and may receive pressure at the clause.
Since the truth values of the variables originate from a satisfying assignment, at most two ports of any clause receive pressure.
Hence, the clause admits a local path as well.
We concatenate the local paths with the paths that are used to stitch together the local curves to obtain a simple cycle $C$.
By construction $\df(C,P)$ is at most $1$.

Now, assume that $G$ contains a simple cycle $C$ with $\df(C,P) \leq 1$.
By construction, cycle $C$ traverses all gadgets and contains exactly one local path for each gadget.
This local path ends at the gates of the gadget and the \f distance between this local path and the local curve is at most $1$.
For a variable, this local path corresponds to either the true or false state.
This directly yields the truth values of the variables.
Each clause gadget also has a local path and hence one or more of its ports give pressure.
Since the propagation gadgets have a local path, the pressure from the clauses results in pressure on a variable gadget.
This pressure ensures that a variable that receives pressure from a clause is in a state satisfying the clause.
Hence, the truth values found from the variables yield a satisfying assignment for formula $F$.

This concludes the proof for Theorem~\ref{thm:main}, showing that simple map matching is indeed NP-complete.
This result and its construction have a number of implications, as discussed below.

\subsection{Implications}
\label{sec:corollaries}

\myparNS{Approximation}
With the NP-hardness result above, we may want to turn to approximation algorithms.
However, a simple argument shows that approximation is also NP-hard.

\begin{corollary}\label{cor:noapprox}
Let $G$ be a partial grid graph of complexity $m$ and let $P$ be a simple polygon of complexity $n$.
It is NP-hard to approximate the minimal $\df(C,P)$ of any simple cycle $C$ in $G$ within any factor of $2^{\textrm{\emph{poly}}(n,m)}$ where $\textrm{\emph{poly}}(n,m)$ is a polynomial in $n$ and $m$.
\end{corollary}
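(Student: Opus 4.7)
The plan is to amplify the hardness gap from Theorem~\ref{thm:main} by a purely geometric scaling. Given a planar monotone 3SAT formula $F$ with $\nu$ variables and $\mu$ clauses, I would run the reduction of Section~\ref{ssec:proof} verbatim, but stretch every stitching-path segment lying between two consecutive gadgets from its original $O(1)$ length to length $L := 2^{q(\nu,\mu)}$ in both $G$ and the matching portion of $\partial P$, where $q$ is a polynomial to be chosen at the end. Stretching is done along the axis-aligned direction already used, so no new polygon corners or graph vertices are introduced, $G$ remains a partial grid graph, and the complexities $n$ of $P$ and $m$ of $G$ stay polynomial in $\nu$ and $\mu$; only the integer coordinates grow to $q(\nu,\mu)+O(\log(\nu+\mu))$ bits each.

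For the yes case the proof of Theorem~\ref{thm:main} transfers unchanged: a satisfying assignment selects a valid local path in each gadget, and concatenating these with the (now very long) stitching edges yields a simple cycle $C$ with $\df(C,P)\le 1$, since along every elongated segment $C$ follows $P$ exactly.

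The work is in the no case, where I would show every simple cycle $C$ in $G$ satisfies $\df(C,P)\ge L/2$. Split on whether $C$ \emph{enters every gadget}, meaning uses at least one edge of each gadget's local graph. If it does, simplicity forces the restriction of $C$ to each gadget to be a simple path between its two gates; the gadget specifications of Section~\ref{ssec:gadgets} make a globally compatible choice of such paths equivalent to a satisfying truth assignment of $F$, impossible in the no case. Otherwise some gadget $G_j$ is missed; each gate of $G_j$ has exactly one incident stitching edge and all its other incident edges inside $G_j$'s local graph, so any simple cycle that visits a gate must use at least one local-graph edge. Since $C$ uses none, it skips both gates of $G_j$ and hence both stitching edges adjacent to $G_j$, whose midpoints then lie at distance at least $L/2$ from all of $C$, forcing $\df(C,P)\ge L/2$ by definition of \f distance.

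Combining the two cases, distinguishing $\min_C \df(C,P)\le 1$ from $\min_C \df(C,P)\ge L/2$ is NP-hard. Given the target factor $2^{\text{poly}(n,m)}$ from the statement and the fact that $n,m$ are polynomial in $\nu,\mu$, we may pick $q$ so that $L/2 \ge 2^{\text{poly}(n,m)}$, which yields the claimed inapproximability. The main obstacle I foresee is tightening the first subcase: one must check that no gadget local graph contains an unintended simple gate-to-gate path whose port-usage pattern escapes the satisfiability correspondence, so that the pressure argument of Theorem~\ref{thm:main} applies to arbitrary simple cycles rather than only to distance-$\le 1$ ones. This is a small case inspection of Figures~\ref{fig:clause_gadget}, \ref{fig:var_placement}, and \ref{fig:prop_gadget}, and involves no new ideas.
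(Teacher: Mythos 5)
Your overall strategy---amplify the yes/no gap by geometric scaling while keeping the combinatorial complexity polynomial---is the same as the paper's, but the specific scaling you chose does not work, and the failure is exactly at the point you flag as ``the main obstacle.'' You stretch only the stitching segments between gadgets and leave the gadgets at their original constant size. For an unsatisfiable formula, the minimal \f distance is then still bounded by a \emph{constant}: the paper's own proof notes that the clause gadget ``breaks'' at distance $c=6$, i.e., its local graph admits a gate-to-gate simple path that avoids all three ports and lies within \f distance about $6$ of the local curve. So in your first subcase (the cycle enters every gadget), a simple cycle can use this cheating path in every fully-pressured clause gadget, remain globally simple, follow $P$ exactly along the stretched stitching segments, and achieve $\df(C,P)\le 6$ even when $F$ is unsatisfiable. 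The claimed lower bound of $L/2$ in the no case is therefore false, and the ``small case inspection'' you defer would refute, not confirm, the needed property: the correspondence between gate-to-gate paths and truth assignments only holds for paths at \f distance at most $1$ (more generally, below the breaking threshold), which is precisely the hypothesis you cannot assume when proving a large lower bound.

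The paper's fix is to scale the \emph{gadgets themselves} (lengthen their local graphs and curves) so that the threshold $c$ at which an unintended local path becomes admissible grows with the target approximation factor; this keeps the number of vertices and edges fixed while only the coordinates grow, and the requirement that coordinates fit in polynomially many bits is what yields the $2^{\mathrm{poly}(n,m)}$ bound. Your second subcase (a skipped gadget forces distance about $L/2$ via the midpoints of unused long stitching edges) is fine as far as it goes, but it only controls cycles that miss a gadget entirely; the cycles that defeat your argument are the ones that visit every gadget but cheat inside one. To repair your proof you would need to scale the entire construction, gadgets included, which is essentially the paper's argument.
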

\begin{proof}
For an unsatisfiable formula the minimal \f distance of a simple cycle in the constructed graph is significantly larger than $1$.
Suppose that this minimal \f distance is strictly greater than $c$.
Any $c$-approximation algorithm for simple map matching is able to decide satisfiability of planar monotone 3SAT formulas.
Thus, unless P$=$NP, no $c$-approximation algorithm can have polynomial execution time.

To determine the exact value of $c$, we wish to determine at which value a local path becomes admissible that does not correspond to the desired behavior of the construction.
This occurs at $c = 6$: the clause gadget breaks, admitting a local path that does not pass through any of the ports.
However, it is straightforward to lengthen the gadgets to increase the value of $c$.
This does not increase the combinatorial complexity of $G$ and $P$, thus maintaining a polynomial-size instance.
The construction now spans an area $O(c(n_F+m_F))$ by $O(cm_F)$, where $n_F$ and $m_F$ are the number of variables and clauses in formula $F$.
Thus, we require $c \leq 2^{\textrm{poly}(m_F,n_F)}$ to encode the coordinates in polynomial space.
\end{proof}

\begin{wrapfigure}[12]{r}{0.3\linewidth}
  \centering
  \includegraphics{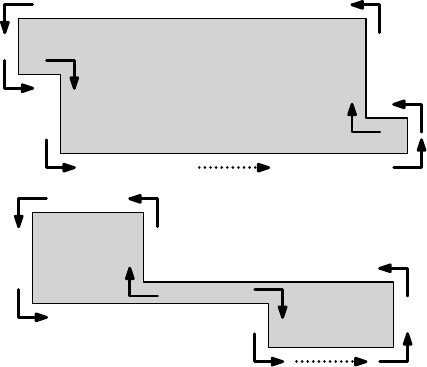}
  \caption{Two polygons with the same bend profile.}
  \label{fig:turnprofile}
\end{wrapfigure}
\myparNS{Counting bends}
For schematization we do not wish to just find some cycle in $G$, but also optimize or bound its complexity, measured in the number of bends.
If $G$ is a partial grid graph (corresponding to rectilinear schematization), any bend is either a left or right turn.
Every rectilinear polygon has a certain \emph{bend profile}: the sequence of left and right bends in counterclockwise order along its boundary.
The bend profile gives no information about edge lengths: seemingly different polygons have the same bend profile (see Fig.~\ref{fig:turnprofile}).
Unfortunately, even with a given bend profile, the problem remains NP-complete.
However, in this reduction the bend profile has length proportional to the complexity of the formula.
It does not prove that no fixed-parameter-tractable (FPT) algorithm exists.

\begin{corollary}\label{cor:nobends}
Let $G$ be a partial grid graph, let $P$ be a simple polygon, let $\eps > 0$ and let $T$ be a bend profile.
It is NP-complete to decide whether $G$ contains a simple cycle $C$ that adheres to $T$ with $\df(C,P) \leq \eps$.
\end{corollary}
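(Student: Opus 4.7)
The plan is to adapt the reduction of Theorem~\ref{thm:main}, modifying the gadgets so that every admissible local path within a gadget traces the same cyclic sequence of left and right turns between its two gates. If we achieve this, then irrespective of which satisfying assignment the cycle encodes, the full cycle will have a single, prescribed bend profile $T$. The reduction then outputs $G$, $P$, $\eps=1$ and $T$, and $F$ is satisfiable if and only if $G$ contains a simple cycle $C$ that adheres to $T$ with $\df(C,P)\leq 1$. Membership in NP is immediate, since both $\df$ and adherence to a given bend profile can be checked in polynomial time.

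I would first tabulate, for each of the three gadgets of Section~\ref{ssec:gadgets}, the admissible local paths together with their bend sequences: the variable gadget admits two paths (the \emph{true} and \emph{false} traversals), the propagation gadget admits two paths (one per propagation direction), and the clause gadget admits three paths (one per port). In general these paths differ both in the number of turns and in their left/right ordering. I would then modify each gadget in isolation, enlarging its bounding polygon by a constant amount and adding short forced detours (single-edge-wide square bumps, accompanied by matching bumps in the local curve) on the paths that have fewer or differently ordered bends. By choosing the orientation and position of each bump, the cyclic sequence of turns between the two gates can be made identical across all admissible paths of the gadget.

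Once every gadget is normalized in this way, the bend profile of the complete cycle is uniquely determined as the interleaving of the per-gadget profiles with the fixed bend sequence of the stitching subcurves constructed in Section~\ref{ssec:proof}. This composite profile $T$ has length polynomial in the complexity of $F$ and is emitted as part of the reduction. Correctness then follows exactly as in Theorem~\ref{thm:main}: the pressure arguments are unaffected by the additions because the detour bumps are forced traversals, so no new local paths become admissible and the existing correspondence between satisfying assignments and admissible cycles is preserved.

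The main technical obstacle is the gadget redesign itself, which must simultaneously maintain three properties: (i) the pressure logic of each gadget is preserved, so that only the intended local paths remain admissible; (ii) the \f distance between every admissible path and the modified local curve is still at most $1$; and (iii) planarity with neighbouring gadgets, together with polynomial bounds on the coordinates, is maintained. Since each gadget has constant combinatorial complexity and sits in a constant-width corridor, these are finite case checks that can be verified by direct inspection of the modified figures, in the same spirit as the verifications carried out for the original construction. Note that this proof leaves $|T|$ proportional to the complexity of $F$, and so does not rule out an FPT algorithm in $|T|$.
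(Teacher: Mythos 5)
Your overall strategy is sound --- you correctly identify that the corollary reduces to exhibiting a single bend profile $T$ shared by all admissible cycles arising from satisfying assignments --- but you take a substantially more laborious route than the paper, and the extra work you introduce is exactly where your argument remains unverified. The paper's proof is a one-line observation: in the gadgets of Section~\ref{ssec:gadgets} as already constructed, every admissible local path of a given gadget makes the \emph{identical} sequence of left and right bends (the two paths of a variable gadget, the two of a propagation gadget, and the three of a clause gadget each coincide in their turn sequences). Hence the bend profile of any valid cycle is already uniquely determined, $T$ can be read off the construction directly, and no gadget modification is needed. Your premise that the local paths ``in general differ both in the number of turns and in their left/right ordering'' is the opposite of what the construction provides, so you set yourself the harder task of normalizing the gadgets with forced detour bumps. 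That task is not obviously achievable as described: a square bump inserts a balanced turn pattern (e.g.\ left--right--right--left), and it is not immediate that inserting such balanced patterns into two paths with genuinely different cyclic left/right orderings can make them agree; moreover each bump must be checked not to admit new local paths, not to violate the $\df \leq 1$ bound against the modified local curve, and not to collide with neighbouring gadgets. You defer all of this to ``finite case checks'' on figures you do not produce, so the technical core of your proof is missing --- and, as it turns out, unnecessary. Your closing remarks (membership in NP, $|T|$ proportional to the complexity of $F$, hence no FPT consequence) match the paper.
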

\begin{proof}
In all constructions, the bends made by the various local paths are identical.
We can easily derive a bend profile that must lead to a simple cycle, if the formula is satisfiable.
\end{proof}

\myparNS{Area preservation}
Suppose we want an area-preserving solution, i.e., the area of cycle $C$ must be equal to that of $P$. 
A simple argument proves that our reduction can be extended to prove that this more constrained problem is also NP-complete. 

\begin{corollary}\label{cor:smm-area}
Let $G$ be a partial grid graph, let $P$ be a simple polygon and let $\eps > 0$.
It is NP-complete to decide whether $G$ contains a simple cycle $C$ with $\df(C,P) \leq \eps$ and $|C| = |P|$.
\end{corollary}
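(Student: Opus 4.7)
The plan is to extend the reduction underlying Theorem~\ref{thm:main} so that every simple cycle $C$ in $G$ with $\df(C,P) \le 1$ automatically satisfies $|C| = |P|$. Since the stitching segments used to form $P$ are shared with $G$ and appear unchanged in any such cycle, the signed area difference $|C| - |P|$ decomposes as a sum, over the gadgets, of the signed area enclosed between the chosen local path and the local curve of that gadget. It therefore suffices to arrange that, inside each gadget, every valid local path differs in enclosed area from the local curve by exactly zero.

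I would achieve this gadget by gadget. For the variable gadget the \emph{true} and \emph{false} paths can be placed as mirror images across the horizontal midline of the gadget, with the local curve drawn on this midline, so that each local path encloses zero signed area relative to the local curve. The propagation gadget can be symmetrized in the same way, since its two valid paths already exit at opposite ports. For the clause gadget the three port-selection paths are not naturally symmetric; here I would attach small area-compensating detours inside the bounding polygon, pinned by additional local-curve segments at \f distance exactly $1$, chosen so that each of the three paths encloses exactly the same signed area as the local curve.

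The main obstacle is the clause gadget: the compensating detours must be placed so that they neither introduce a fourth admissible local path (which would let an unsatisfied clause be ``satisfied'' and break the reduction) nor enlarge the gadget beyond the fixed dimensions of Fig.~\ref{fig:planar3sat}(b). This can be handled by placing the detours in otherwise-unused cells within the bounding polygon and pinning them with additional local-curve segments. Each modification adds only a constant number of edges per gadget, so the instance remains a polynomial-size partial grid graph. The correctness of the reduction for $\df(C,P)\le 1$ is inherited from Theorem~\ref{thm:main}, and $|C| = |P|$ now holds automatically, establishing NP-completeness of the area-preserving variant.
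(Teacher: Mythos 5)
Your opening observation is the same one the paper relies on: since $C$ and $P$ coincide outside the gadgets, $|C|-|P|$ decomposes into per-gadget signed areas between the chosen local path and the local curve. From there, however, the paper takes a much lighter route than you do: it does not touch the gadgets at all. Instead it duplicates the entire construction with inside and outside of the polygon inverted (by restitching the endpoints of the subcurves $P_\text{v}$, $P_\text{p}$, $P_\text{n}$ with those of the duplicate, see Fig.~\ref{fig:constructionoverview_complement}), and then chooses the \emph{same} local path in both copies; because the interior is flipped in the duplicate, the area change $+\delta$ contributed by a local path in one copy is cancelled by the $-\delta$ contributed by the identical path in the other. This proves existence of an area-preserving solution for every satisfying assignment without any gadget redesign, and the reverse direction is inherited unchanged.

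Your route --- forcing every gadget to be area-neutral --- has a genuine gap at exactly the place you flag, and the mechanism you propose to close it does not work as described. The local curve of a clause gadget is a single fixed curve that \emph{every} admissible local path must match within \f distance $1$ under a monotone reparametrization. Consequently, any ``additional local-curve segment'' you add to pin a detour must be approached by all three canonical routes, not just the one whose area you want to adjust; if the pinning segment sits in a region visited only by the route through port $i$, the routes through the other two ports become inadmissible and the clause gadget breaks. So route-specific area compensation cannot be driven by curve segments; it would have to be done purely by re-routing graph edges within the width-$2$ corridor around the unchanged curve, inside the fixed-width bounding polygon of Fig.~\ref{fig:planar3sat}(b), while provably not admitting a fourth local path that avoids all ports --- none of which you construct or verify. (A smaller issue: the paper only guarantees that every admissible clause path presses at least one port, so there may be admissible paths beyond the three you consider; your claim that \emph{every} valid cycle is automatically area-preserving would require neutralizing all of them, although for the reduction it would suffice to exhibit one area-neutral representative per needed behavior.) As it stands, the clause-gadget step is an unsubstantiated assertion, and it is precisely the step the paper's duplication trick is designed to avoid.
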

\begin{proof}
Fig.~\ref{fig:constructionoverview_complement} at the end of the paper shows an overview of the modified construction.
The key observation is that the original construction is duplicated, but inside and outside the polygon have been inverted.
This is achieved by connecting the endpoints of the subcurves $P_v$, $P_p$ and $P_n$, slightly differently, i.e., with those of the duplicate.
Any solution $C$ coincides with $P$ outside the gadgets: only the local paths change the area of $C$. 
Hence, any change in area resulting from a local path in one of the gadgets in the one copy can be counteracted by choosing the exact same local path in the other copy.
\end{proof}

\myparNS{Variants}
Finally, there are a number of variants of the problem that can be proven to be NP-complete via the same construction.
As strict monotonicity in the homeomorphism is not crucial for the reduction, the problem under the \emph{weak} \f distance is also NP-complete.
The clause gadget admits an extra local path, one that still exhibits the desired behavior.
The problem is also NP-complete under the \emph{discrete} (weak) \f distance, as we may sample the graph and polygon appropriately.
Not every grid location needs a vertex, preserving the inapproximability result of Corollary~\ref{cor:noapprox}. 
As all interaction between gadgets is based on edges, it is also NP-complete to determine the existence of an ``edge-simple'' cycle that uses each edge at most once (but vertices may be used more than once).
It is not essential in any of the above (except for area preservation) for $P$ to be a closed curve: the same construction works for open curves, and thus these variants are NP-hard as well.

\section{Connected face selection is NP-complete}
\label{sec:cfs}

With the negative results using the \f distance, we now consider the symmetric difference via Problem~\ref{prob:CFS}, connected face selection. We prove that this problem is also NP-complete, even on a full grid graph, as captured in the following theorem.
\begin{theorem}\label{thm:cfs}
Let $G$ be a full grid graph, let $P$ be a simple polygon and let $D > 0$.
It is NP-complete to decide whether $G$ contains a connected face set $S$ with $\ds(S,P) \leq D$.
\end{theorem}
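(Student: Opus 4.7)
Membership in NP is immediate: connectivity of a candidate face set $S$ is verified in polynomial time via the induced subgraph of $G^*$, and $\ds(S,P)$ is computed in polynomial time through the decomposition $|P| + \sum_{f \in S}(|f| - 2|P \cap f|)$ from Section~\ref{sec:prelims}, since each unit face $f$ of a full grid graph has a constant-size intersection with any polygonal $P$.

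For NP-hardness I would reduce once more from planar monotone 3SAT, reusing the overall layout of Theorem~\ref{thm:main}: variables on a horizontal spine, positive clauses above, negative clauses below, and strictly vertical links. The essential shift from the \f-distance setting is that on a full grid graph every cell is always available to $S$, so all the logic must live in the shape of $P$. The plan is to design $P$ so that each cell $f$ falls into one of two regimes: \emph{attractive}, with $|P \cap f| > |f|/2$ so that adding $f$ to $S$ decreases the symmetric difference, or \emph{repulsive}, with $|P \cap f| < |f|/2$ so that adding $f$ increases it. Absent a connectivity constraint, the optimum is the greedy set of all attractive cells, with a base cost $D_0$ I can read directly off $P$; the gadget design must force this greedy set to fragment into gadget-local components whose reconnection encodes the truth assignment.

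A variable gadget would be an attractive block equipped with two alternative attractive ``bridges'', one pointing upward toward the positive clauses and one downward toward the negative clauses, shaped so that keeping both bridges over-covers $P$ and is strictly worse than keeping exactly one. The chosen bridge encodes the truth value. A propagation gadget would be a corridor of attractive cells connecting a bridge of a variable gadget to a port of a clause gadget, so that an unchosen bridge leaves the corresponding corridor detached from its variable. A clause gadget would be an attractive block separated from the ambient attractive frame except through its three incident corridors; when at least one literal satisfies the clause, the block reconnects through that corridor at no extra cost, whereas a fully unsatisfied clause can only be reconnected by swallowing a strip of repulsive cells (or by discarding an equivalent amount of attractive area), each paying a fixed surplus $\delta > 0$. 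Taking $D$ to be $D_0$ plus the unavoidable per-gadget surpluses of a satisfying assignment then yields the desired equivalence.

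The main obstacle will be tightening the cost accounting so that no unforeseen connected face set undercuts the intended one. Because every unit square belongs to $G$, $S$ may route itself arbitrarily; each gadget must therefore sit inside a sufficiently thick moat of strongly repulsive cells whose per-cell cost strictly dominates any saving obtainable on the opposite side. I expect this can be achieved while keeping $P$ of polynomial complexity and all coordinates of polynomial bit-length, so that the reduction runs in polynomial time. The stated extensions then follow with little extra effort: in the intended optimum $S$ is already simply connected, so forbidding holes changes nothing; replaying the same gadget shapes on a triangular or hexagonal tiling only alters the local attractive/repulsive bookkeeping; and area preservation is handled by the mirroring construction of Corollary~\ref{cor:smm-area}, duplicating the instance with interior and exterior swapped so that any local area deviation in one copy is cancelled in the other.
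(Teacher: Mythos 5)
Your plan has a genuine gap at its core. The symmetric difference over a face set decomposes additively: as in Section~\ref{sec:prelims}, $\ds(S,P) = |P| + \sum_{c \in S}\bigl(|c| - 2|P\cap c|\bigr)$, so each cell contributes a fixed weight $w(c) = 1 - 2|P\cap c|$ that is completely independent of which other cells are in $S$. An ``attractive'' cell ($w(c)<0$) therefore \emph{always} strictly decreases $\ds(S,P)$ when added, no matter what else $S$ contains; there is no such thing as ``over-covering.'' This breaks your variable gadget: if both the upward and the downward bridge consist of attractive cells, the optimal $S$ simply takes both bridges (and both are then connected to all their corridors), so no exclusive truth value is forced and every clause is trivially reachable. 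The only non-additive leverage available in this problem is the connectivity constraint itself, which is why a 3SAT-style XOR gadget is so hard to realize here: you would have to make the bridges carry strictly positive cost and then run an exact global budget argument, at which point you are no longer reusing the Theorem~\ref{thm:main} layout but solving a different (and delicate) budgeted-connection problem. Your moat concern is real but secondary to this.

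The paper avoids the issue by reducing from a problem whose structure matches the additive-weight-plus-connectivity setting exactly: the rectilinear Steiner tree problem. Cells of the full grid graph encode the vertices, edges, and faces of the Hanan grid of the terminal set; $P$ is built so that terminal cells have weight $-\frac34$, junction cells weight $0$, edge cells weight $\|e\|/2$, and face cells weight $1$; the threshold $D$ then forces any feasible $S$ to contain all terminal cells, avoid all face cells, and spend at most total weight $\frac12$ on edge cells, which is precisely a Steiner tree of length at most $1$. If you want to salvage a reduction of your own, you should start from a source problem of this connection type rather than from planar monotone 3SAT.
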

The problem is obviously in NP, since the symmetric difference and connectedness can be straightforwardly verified in polynomial time.
Here we show that it is also NP-hard.
We conclude this section with some implications of this result.

\subsection{Reduction}

\myparNS{Rectilinear Steiner tree}
The rectilinear Steiner tree problem is formulated as follows:
given a set $X$ of $n$ points in $\R^2$, is there a tree $T$ of total edge length at most $L$ that connects all points in $X$, using only horizontal and vertical line segments?
Vertices of $T$ are not restricted to $X$.
This problem was proven NP-complete by Garey and Johnson~\cite{GareyJ-1977}.
Hanan~\cite{Hanan-1966} showed that an optimal result must be contained in graph $H(X)$ corresponding to the arrangement of horizontal and vertical lines through each point in $X$ (see Fig.~\ref{fig:cfs_sketch}(a)). Subsequently, this was called the \emph{Hanan grid} (e.g. \cite{Zachariasen-2001}). 
We call a vertex of $H(X)$ a \emph{node} if it corresponds to a point in $X$ and a \emph{junction} otherwise.
As the problem is scale invariant, we assume $L = 1$. 
All edges in $H(X)$ must be shorter than $1$: otherwise, the answer is trivial---no such tree exists.

\begin{figure}[t]
\centering
\includegraphics{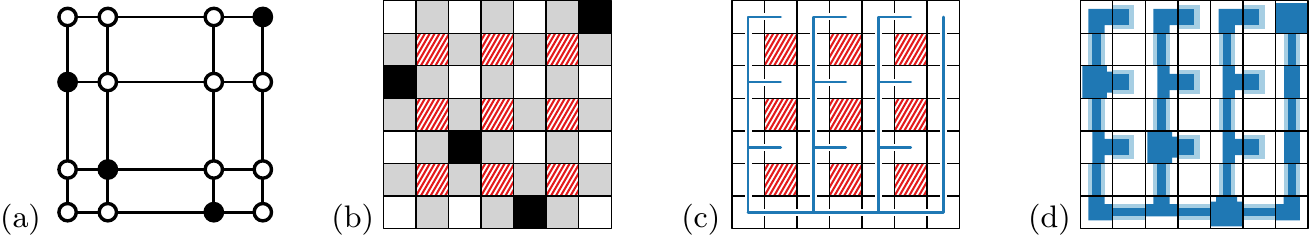}
\caption{Sketch of the reduction. (a) The Hanan grid of $X$ (dark points). (b) $G$ consists of unit square cells: node-cells are colored black, junction-cells white, edge-cells gray; face-cells are hatched. (c) ``Skeleton'' $\varsigma$ for polygon $P$. (d) Sketch of polygon $P$ constructed on top $G$.; light blue areas indicate flexible parts in the construction that represent edge lengths.}
\label{fig:cfs_sketch}
\end{figure}

\mypar{Reduction overview}
We must transform point set $X$ into a full grid graph $G$, a polygon $P$ and a value $D > 0$.
We construct $G$ such that each \emph{cell} (face of $G$) corresponds to a vertex (\emph{node-cell} or \emph{junction-cell}), an edge (\emph{edge-cell}), or a bounded face (\emph{face-cell}) of $H(X)$; see Fig.~\ref{fig:cfs_sketch}(b). 
We then construct polygon $P$ by defining a part of the polygon inside all cells, except for the face-cells: $P$ does not overlap these.
To structure $P$ we use a \emph{skeleton} $\varsigma$, a tree spanning the non-face-cells in the dual of $G$ (Fig.~\ref{fig:cfs_sketch}(c)).

\mypar{Weights}
Recall that the symmetric difference between $P$ and a face set $S = \{ c_1, c_2, \ldots \}$ may be computed as $|P| + \sum_{c \in S} (|c| - 2 \cdot |P \cap c|)$.
As $|c| = 1$, we define the \emph{weight} of a cell $c$ in $G$ as $w(c) = 1 - 2 \cdot |P \cap c|$.
Hence, the symmetric difference is $|P| + \sum_{c \in S} w(c)$.
We set the desired weight $w(c)$ for cell $c$ to: 
%
\begin{itemize}
\item \makebox[2.5em][l]{$-\frac{3}{4}$} if $c$ is a node-cell; 
\item \makebox[2.5em][l]{$0$} if $c$ is a junction-cell; 
\item \makebox[2.5em][l]{$\|e\|/2$} if $c$ is an edge-cell, where $\|e\|$ is the length of the corresponding edge $e$ in $H(X)$;
\item \makebox[2.5em][l]{$1$} if $c$ is a face-cell.
\end{itemize}
Given a desired weight $w(c)$ for cell $c$, the area of overlap $A(c)$ is $|P \cap c| = \frac{1 - w(c)}{2}$.
Every cell has $A(c)$ area of $P$ inside; $|P|$ equals $\sum_{c \in G} A(c)$.
We call $P \cap c$ the \emph{local polygon} of $c$.
We set $D = |P| - \frac{3}{4} n + \frac{1}{2} = (\sum_{c \in G} A(c)) - \frac{3}{4} n + \frac{1}{2}$.
That is, the sum of weights is at most $-\frac{3}{4} n + \frac{1}{2}$.
This can be achieved only if the face set contains all node-cells and no face-cell.

\mypar{Designing cells}
We design every cell such that the desired weight is achieved. 
For a face-cell, this is trivial: $w(c) = 1$, hence $A(c) = 0$ and we keep $P$ disjoint from this cell. 
For all other cells $P$ to cover some fraction of its interior, as dictated by $A(c)$. 
Skeleton $\varsigma$ dictates how to connect the local polygons; we ensure that at least the middle $25\%$ of the shared edge (the \emph{connector}) is covered. 
A local polygon should never touch the corners of its cell.

Node- and junction-cells may have up to four neighbors in $\varsigma$. 
Covering the four connectors is done with a cross shape, covering $\frac{7}{16}$th of the cell's area. 
By thickening this cross, we can straightforwardly support $A(c)$ with $\frac{7}{16} \leq A(c) < 1$, ensuring not to touch the corners or edges shared with cells that are not adjacent in $\varsigma$.
A node-cell has weight $-\frac{3}{4}$; $A(c) = \frac{7}{8}$. 
A junction-cell has weight $0$; $A(c) = \frac{1}{2}$. 
Hence, both can be represented (see Fig.~\ref{fig:cfs_design}(a--b)).

An edge-cell has weight $\|e\|/2$ and thus should cope with weights between zero and a half; $A(c)$ lies between $\frac{1}{4}$ and $\frac{1}{2}$. 
Any edge-cell has degree $1$ or $2$ in $\varsigma$; if it has degree $2$, the neighboring cells are on opposite sides.
Hence, $A(c) = \frac{1}{4}$ can be trivially handled by creating a rectangular shape that touches exactly the necessary connectors.
$A(c) = \frac{1}{2}$ is dealt with by widening this rectangle within the cell; any intermediate weight is handled by interpolating between these two.
This is illustrated in Fig.~\ref{fig:cfs_design}(c).

\begin{figure}[t]
\centering
\includegraphics{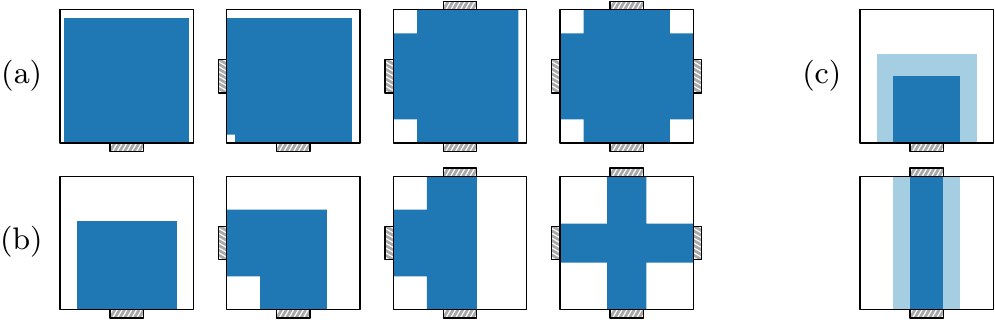}
\caption{Examples of cell design, with connectors as hatched rectangles. (a) Node-cells are covered for $87.5\%$ by $P$. (b) Junction-cells are covered for $50\%$ by $P$. (c) Between $25\%$ (dark blue) and $50\%$ (dark and light blue) of edge-cells are covered by $P$.}
\label{fig:cfs_design}
\end{figure}

\mypar{Proving the theorem}
We now have graph $G$, polygon $P$ and a value for $D$.
We must prove that the reduction is polynomial and that $X$ has a rectilinear Steiner tree of length at most $1$, if and only if there is a connected face set $S$ in $G$ such that $\ds(S,P) \leq D$.
The former is trivial by observing that: $P$ has $O(1)$ complexity in each of the $O(n^2)$ cells; computing tree $T$ for guiding the construction of $P$ can be done with for example a simple breadth-first search.

Suppose we have a rectilinear Steiner tree $T$ of length at most $1$ in the Hanan grid.
We construct a face set $S$ as the union of all cells corresponding to vertices and edges in $T$.
By definition of $T$, this must contain all node-cells and cannot contain face-cells.
As junction-cells have no weight, the total weight of $S$ is $-\frac{3}{4} n + \sum_{e \in T} w(c_e) = -\frac{3}{4} n + \frac{1}{2} \sum_{e \in T} \| e \|$ where $c_e$ is the cell of $G$ corresponding to edge $e$.
By assumption $\sum_{e \in T} \| e \|\leq 1$: the total weight is at most $-\frac{3}{4} n + \frac{1}{2}$.
Thus, the symmetric difference for $S$ is at most $|P| -\frac{3}{4} n + \frac{1}{2} = D$.

Suppose we have a connected face set $S$ in $G$ such that $\ds(S,P) \leq D$.
The total weight is thus $D - |P| = -\frac{3}{4} n + \frac{1}{2}$. 
Since face-cells have weight $1$ and only node-cells have negative weight, being $-\frac{3}{4}$, this can be achieved only if $S$ contains all node-cells and no face-cells.
In particular, the sum of the weights over all edge-cells is at most $\frac{1}{2}$.
Thus, the subgraph of $H(X)$ described by the selected cells must be connected, contain all nodes of $X$, and have total length at most $1$.
If this subgraph is not a tree, we can make it a tree, by leaving out edges (further reducing the total length), until the subgraph is a tree.

\subsection{Implications} 

\myparNS{Simply connected}
The same reduction works for a \emph{simply} connected face set, as Steiner tree $T$ cannot contain cycles and a simply connected face set in $G$ readily implies a tree.

\begin{corollary}
Let $G$ be a full grid graph, let $P$ be a simple polygon and let $D > 0$.
It is NP-complete to decide whether $G$ contains a simply connected face set $S$ with $\ds(S,P) \leq D$.
\end{corollary}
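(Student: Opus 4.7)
The plan is to reuse the reduction from the proof of Theorem~\ref{thm:cfs} verbatim and only revisit the two directions of the equivalence, strengthening both to account for the simple-connectivity requirement. Membership in NP is immediate, since simple connectivity can be verified in polynomial time by checking that both $S$ and $G \setminus S$ are connected (e.g.\ via BFS on $G^*$).

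For the forward direction, I would take a rectilinear Steiner tree $T$ of length at most $1$ in $H(X)$ and form $S$ exactly as before, namely the union of cells corresponding to vertices and edges of $T$. The symmetric-difference bound is unaffected, so it suffices to argue that $S$ is simply connected. Because the cells of $G$ are placed in the same relative positions as their Hanan-grid counterparts and each edge-cell sits exactly between the two cells it connects, $S$ is a ``thickened'' copy of $T$ inside $G$. Since $T$ is acyclic, this thickening encloses no bounded region: every face-cell (and every other unselected cell) can reach the outer face via a dual path in $G^* \setminus S^*$. Hence $G \setminus S$ is connected and $S$ is simply connected.

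For the backward direction, let $S$ be a simply connected face set with $\ds(S,P) \leq D$. The weight argument from the proof of Theorem~\ref{thm:cfs} still forces $S$ to contain every node-cell, no face-cell, and to correspond to a connected subgraph $H \subseteq H(X)$ of total edge length at most $1$. I would then argue that $H$ is already acyclic, so no edges need to be removed. Suppose for contradiction $H$ contains a cycle $\gamma$. The corresponding cells of $\gamma$ form a closed ring in $G$ that encloses at least one bounded face of $H(X)$, whose face-cell is by assumption \emph{not} in $S$. That face-cell then lies in a bounded component of $G \setminus S$, making the complement disconnected and contradicting simple connectivity of $S$. Thus $H$ is a Steiner tree for $X$ of length at most $1$.

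The main subtlety is the geometric claim in the backward direction: that any cycle in $H(X)$ lifts to a ring of cells in $G$ that genuinely separates an interior face-cell from the outer face. This is essentially a discrete Jordan-curve argument, and it hinges on the specific placement rule used in the construction of $G$ (node-, junction- and edge-cells are laid out in the same combinatorial pattern as the arrangement $H(X)$). Once this separation claim is granted, the corollary follows with no further modifications to the reduction or to the value of $D$.
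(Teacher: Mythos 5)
Your proposal is correct and follows essentially the same route as the paper: the forward direction rests on the observation that a face set built from an acyclic Steiner tree is simply connected, and the backward direction on the fact that a simply connected face set avoiding all face-cells describes a tree (the paper states both observations without elaboration, while you supply the discrete Jordan-curve detail). Note only that your backward direction is slightly more than needed, since a simply connected face set is in particular connected and the conclusion already follows from the backward direction of Theorem~\ref{thm:cfs}.
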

\begin{proof}
The face set obtained from a Steiner tree must be simply connected, since $T$ cannot contain cycles. 
In addition, a simply connected face set in $G$ (that does not contain a face-cell) directly describes a tree.
\end{proof}

\myparNS{Area preservation}
With an area-preservation constraint the problem remains NP-complete. 
We only sketch an argument; full proof can be found in Appendix~\ref{app:cfs-impl}.
This readily implies that variants prescribing the number of faces or total area via a parameter are also NP-hard.

\begin{corollary}\label{cor:cfs-area}
Let $G$ be a full grid graph, let $P$ be a simple polygon and let $D > 0$.
It is NP-complete to decide whether $G$ contains a (simply) connected face set $S$ with $\ds(S,P) \leq D$ and $|S| = |P|$.
\end{corollary}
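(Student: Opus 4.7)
The plan is to augment the construction from Theorem~\ref{thm:cfs} with polynomially many \emph{free cells} whose inclusion leaves $\ds(S,P)$ unchanged but whose count can be tuned to enforce $|S| = |P|$. I would attach a chain $c_1, \ldots, c_M$ of additional unit cells to $G$, with $c_1$ sharing an edge with one of the node-cells of the main construction and each $c_{i+1}$ sharing an edge with $c_i$. Inside every $c_i$, $P$ is designed to cover exactly half of the cell, giving $A(c_i) = \frac{1}{2}$ and hence weight zero in the sense of Section~\ref{sec:cfs}. Any prefix of the chain can therefore be freely included in or excluded from $S$ without affecting $\ds(S,P)$, and because the chain is attached at a single cell and extends outward, every prefix preserves both connectivity and simple connectivity.

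Setting $|P| = |P|_0 + M/2$ and $D = |P| - \tfrac{3}{4}n + \tfrac{1}{2}$ as in Theorem~\ref{thm:cfs}, any face set with $\ds(S,P) \leq D$ still realises a Steiner tree on the main construction, so the number $|S|_0$ of main-construction cells in $S$ lies in an interval $[|S|_0^{\min}, |S|_0^{\max}]$ of polynomial width. The target equation $|S|_0 + p = |P|$ with $p \in [0, M]$ chain cells selected is solvable for every valid $|S|_0$ provided $M \geq 2\max(|P|_0 - |S|_0^{\min}, |S|_0^{\max} - |P|_0)$, which is polynomial in the input size. A valid Steiner tree then yields a face set with $\ds \leq D$ augmented by the appropriate prefix of the chain to satisfy $|S| = |P|$; conversely, any face set meeting both constraints restricts to a valid Steiner tree configuration by discarding chain cells.

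The main obstacle is bookkeeping rather than conceptual: one must design the free cells' local polygons so that $P$ remains a single simple polygon, which is routine given the connector mechanism already used in Theorem~\ref{thm:cfs}, and one must reconcile the fact that $|P|_0$ need not be integral while $|S|$ is, which can be arranged by scaling the construction or by adjusting the parity of $M$; neither affects the polynomial size. The simply connected variant is immediate because a chain is a tree and attaching it preserves simple connectivity, so the same construction works verbatim. The final remark of the corollary about prescribing the number of faces or the total selected area as a parameter then follows directly, since $|P|$ is fixed by the construction and $|S|$ equals the number of selected cells.
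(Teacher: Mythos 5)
Your proposal is correct and takes essentially the same route as the paper: the paper likewise pads the construction of Theorem~\ref{thm:cfs} with polynomially many weight-zero ($A(c)=\frac{1}{2}$) ``overflow'' cells attached at a boundary node-cell (arranged as a block behind a row of weight-one separator cells rather than as a chain), and selects however many of them are needed to reach $|S|=|P|$, with the same counting argument bounding $|P|$ against the possible number of selected cells. The one weak spot is your integrality fix---adjusting the parity of $M$ only shifts $|P|$ by half-integers and cannot absorb the generally non-half-integral $|P|_0$---but the paper treats this point just as informally, by allowing rounding of $|P|$, letting $P$ spill slightly into the outer face, or perturbing the node-cell weight.
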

\begin{proof}
We assume for simplicity that no points in $X$ share an $x$- or $y$-coordinate: $H(X)$ is then exactly an $n \times n$ grid.
The number of cells we need to represent a Steiner tree spanning $X$ is at least $2 n - 1$ and at most $2 n^2 - 1$.
Thus, we need $|P| = \sum A(c) \geq 2 n^2 - 1$, as to allow sufficiently many cells to be selected in the original construction. 

To achieve the necessary area for $P$, we are going to add more cells to the construction, each with $A(c) = \frac{1}{2}$ and thus $w(c) = 0$.
We add $(2 n - 1)^2 + 1 = 4n^2 - 4n + 2$ such cells.
Due to their weight, their presence in or absence from face set $S$ has no effect on the symmetric difference.
We need the new cells to not interfere with the original construction.
Therefore, they should be adjacent only with one of the node-cells on the boundary of the construction.
Thus we also add $2 n - 2$ cells weight $w(c) = 1$ to separate the new cells from the original cells.
Fig.~\ref{fig:CFS_sketcharea} illustrates an overview of this extended construction and its new skeleton.
Note that the newly added cells have the same weights as the original junction-cells and face-cells; the construction of $P$ thus extends straightforwardly.

\begin{figure}[h]
\centering
\includegraphics{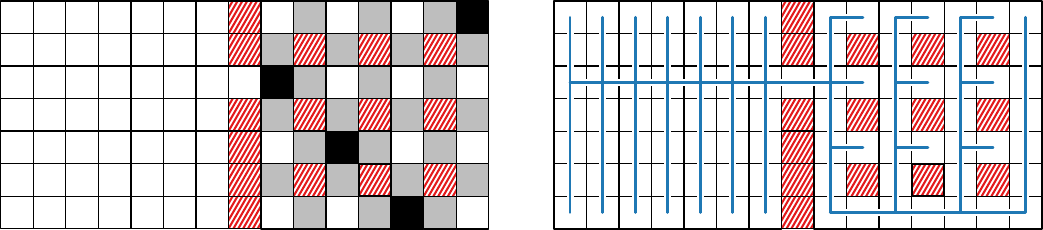}
\caption{Adding overflow cells to the construction, cell types (left) and skeleton (right).}
\label{fig:CFS_sketcharea}
\end{figure}

If we are given some connected face set (that now uses a fixed number of cells) that has sufficiently low symmetric difference, we still know that it must span all the node-cells and thus encode a rectilinear Steiner tree of sufficiently low length.

To transform a rectilinear Steiner tree $T$ to a connected face set, we must argue that we can always select the correct number of cells.
First, let us bound the size of $|P|$. 
Since edges have $A(c)$ in between $\frac{1}{4}$ and $\frac{1}{2}$, it is easy to derive that the total area $A_o$ of the original construction satisfies $\frac{7}{8} n + \frac{1}{2} (n^2 - n) + \frac{1}{4} \cdot 2 n (n - 1) \leq A_o \leq \frac{7}{8} n + \frac{1}{2} (n^2 - n) + \frac{1}{2} \cdot 2 n (n - 1) $; this simplifies to $n^2 - \frac{1}{8} n \leq A_o \leq \frac{3}{2} n^2 - \frac{5}{8} n$.
The new cells add $A_n = \frac{1}{2} (4n^2 - 4n + 2) = 2n^2 - 2n + 1$ area.
Thus, $|P| = A_n + A_o$ is bounded by the interval 
$[3n^2 - \frac{17}{8} n + 1, \frac{7}{2} n^2 - \frac{21}{8} n + 1]$.
Hence, the number of cells we need to select is more than $2n^2 -1$ and strictly less than the number of newly added cells with $w(c) = 0$.
Therefore, we simply apply the original transformation, but use the new cells as ``overflow'' for any cells in excess of those needed to represent $T$.

The above assumes that the eventual area of $P$ is an integer; since it depends on the edge lengths in $H(X)$, it likely is not.
This can be remedied by either assuming we are allowed to round the area of $P$, or by letting $P$ extend slightly into the outer face to make it integer. 
Note that the weight of the outer face is always infinite for a bounded $P$.
Alternatively, we can use any weight for node-cells, strictly in between $\frac{1}{2}$ and $1$.
\end{proof}

\mypar{Variants}
More general graph classes (e.g. plane graphs) are also NP-hard; this is readily implied by Theorem~\ref{thm:cfs}. 
Finally, the problem remains NP-complete for graphs representing hexagonal and triangular tilings (by combining two triangles into one slanted square).

\newpage
\section{Conclusions}
\label{sec:conclusion}

Schematic maps are an important tool to visualize data with a geographic component.
We studied discretized approaches to their construction, by restricting solutions to a plane graph $G$.
This promotes alignment and uniformity of edge lengths and avoids the risk of visual collapse.
We considered two variants: \emph{simple map matching} using the \f distance and \emph{connected face selection} using the symmetric difference.
Unfortunately, both turn out to be NP-complete; the former is even NP-hard to approximate.
The proofs readily imply that a number of variants are NP-hard as well, even with an area-preservation constraint.


\mypar{Open problems}
It remains open whether (general) simple map matching is fixed-parameter tractable. 
Moreover, there is quite a gap between the graph necessary for the reduction and the graphs that would be useful in the context of schematization (e.g. full grid graphs, triangular tilings).
Are such instances solvable in polynomial time?
For connected face selection, we know hardness on such constructed graphs, but it remains open whether approximation or FPT algorithms are possible.
In both methods the reduction needs rather convoluted polygons, very unlike the geographic regions that we want to schematize: do realistic input assumptions help to obtain efficient algorithms?
Recently, Bouts~\etal~\cite{BoutsKKMSV-2016a,BoutsKKMSV-2016b} have show that any full grid graph admits a simple cycle with \f distance bounded using a realistic input model, called narrowness. 
However, this does not readily preclude the decision problem from being NP-hard, even for polygons that have bounded narrowness.

The \f distance is a bottleneck measure and thus results obtained via simple map matching may locally deviate more than necessary, even when minimizing the number of bends.
Buchin~\etal~\cite{BuchinBMS-2012} introduced ``locally correct \f matchings'' to counteract this flaw with the \f distance.
Can we extend this concept to simple map matching?

\mypar{Acknowledgments}
{
The author would like to thank: Bettina Speckmann and her Applied Geometric Algorithms group, Kevin Buchin, Bart Jansen, Arthur van Goethem, Marc van Kreveld and Aidan Slingsby for inspiring discussions on the topic of this paper; Gerhard Woeginger for pointing out the exponential bound in Corollary~\ref{cor:noapprox}. 
The author was partially supported by the Netherlands Organisation for Scientific Research project 639.023.208 and Marie Sk\l{}odowska-Curie Action MSCA-H2020-IF-2014 656741.
}

\newpage

\begin{figure}
\centering
\includegraphics{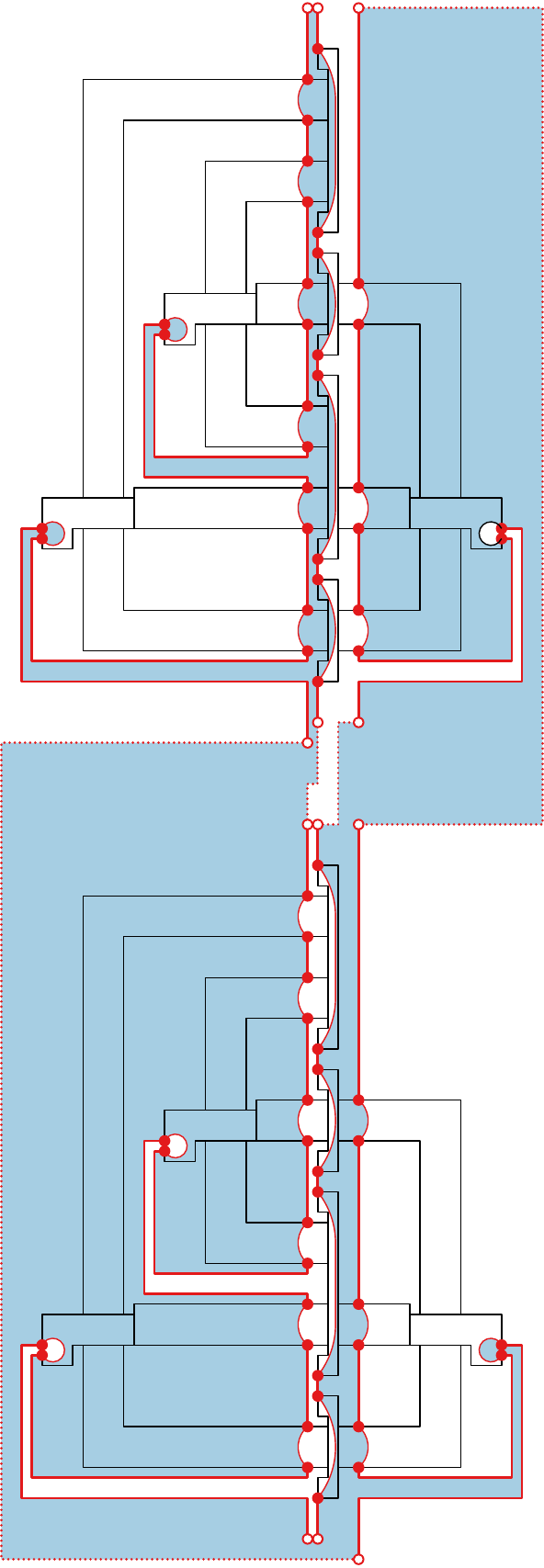}
\caption{Duplication strategy allows us to ensure that an area-preserving solution must exist for a satisfying formula. Local curves have been schematically represented by a circular arc. The interior of $P$ is shaded. Construction has been rotated by 90 degrees.}
\label{fig:constructionoverview_complement}
\end{figure}


\begin{thebibliography}{10}

\bibitem{AltERW-2003}
H.~Alt, A.~Efrat, G.~Rote \& C.~Wenk.
\newblock Matching planar maps.
\newblock {\em J. Algo.}, 49:262--283, 2003.

\bibitem{AltG-1995}
H.~Alt \& M.~Godau.
\newblock Computing the {F}r{\'e}chet distance between two polygonal curves.
\newblock {\em Int. J. Comp. Geom. \& Appl.}, 5(1--2):78--99, 1995.

\bibitem{AroraRV-2009}
S.~Arora, S.~Rao \& U.~Vazirani.
\newblock Expander flows, geometric embeddings and graph partitioning.
\newblock {\em J. ACM}, 56(2):1--37, 2009.

\bibitem{BergK-2010}
M.~de~Berg \& A.~Khosravi.
\newblock Optimal binary space partitions in the plane.
\newblock In {\em Proc. 16th COCOON}, LNCS 6196, pages 216--225, 2010.

\bibitem{BoutsKKMSV-2016a}
Q.~W.~Bouts, I.~Kostitsyna, M.~van~Kreveld, W.~Meulemans, W.~Sonke \& K.~Verbeek.
\newblock Mapping polygons to the grid with small Hausdorff and \f distance.
\newblock In {\em Abstr. 32nd EuroCG}, pages 87--90, 2016.

\bibitem{BoutsKKMSV-2016b}
Q.~W.~Bouts, I.~Kostitsyna, M.~van~Kreveld, W.~Meulemans, W.~Sonke \& K.~Verbeek.
\newblock Mapping polygons to the grid with small Hausdorff and \f distance.
\newblock {\em Proc. 24th ESA}, 2016. To appear.

\bibitem{BrakatsoulasPSW-2005}
S.~Brakatsoulas, D.~Pfoser, R.~Salas \& C.~Wenk.
\newblock On map-matching vehicle tracking data.
\newblock In {\em Proc. 31st Int. Conf. on Very Large Data Bases}, pages
  853--854, 2005.

\bibitem{BuchinBMS-2012}
K.~Buchin, M.~Buchin, W.~Meulemans \& B.~Speckmann.
\newblock Locally correct {F}r\'echet matchings.
\newblock In {\em Proc. 20th ESA}, LNCS 7501, pages 229--240, 2012.

\bibitem{BuchinMRS-2015}
K.~Buchin, W.~Meulemans, A.~van~Renssen \& B.~Speckmann.
\newblock Area-preserving simplification and schematization of polygonal subdivisions.
\newblock {\em ACM TSAS}, 2(1): Article No. 2, 36 pages, 2016.

\bibitem{CabelloBK-2005}
S.~Cabello, M.~de~Berg \& M.~van~Kreveld.
\newblock Schematization of networks.
\newblock {\em Comp. Geom.: Theory \& Appl.}, 30(3):223--238, 2005.

\bibitem{CanoBCPSS-2015}
R.~G.~Cano, K.~Buchin, T.~Castermans, A.~Pieterse, W.~Sonke \& B.~Speckmann.
\newblock Mosaic drawings and cartograms.
\newblock {\em Computer Graphics Forum}, 34(3):361--370, 2015.

\bibitem{CiceroneC-2012}
S.~Cicerone \& M.~Cermignani.
\newblock Fast and simple approach for polygon schematization.
\newblock In {\em Proc. 12th Int. Conf. on CS \& Its Appl.}, LNCS 7333, pages 267--279, 2012.

\bibitem{DellingGNPI-2014}
D.~Delling, A.~Gemsa, M.~N\"ollenburg, T.~Pajor \& I.~Rutter.
\newblock On d-regular schematization of embedded paths.
\newblock {\em Comp. Geom.: Theory \& Appl.}, 47(3A):381--406, 2014.

\bibitem{DijkGHMS-2014-giscience}
T.~C.~van~Dijk, A.~van~Goethem, J.-H.~Haunert, W.~Meulemans \& B.~Speckmann.
\newblock Map schematization with circular arcs.
\newblock In {\em Proc. 8th GIScience}, LNCS 8728, pages 1--17, 2014.

\bibitem{ElKebirK-2014}
M.~El-Kebir \& G.~W.~Klau.
\newblock Solving the maximum-weight connected subgraph problem to optimality.
\newblock {\em CoRR}, abs/1409.5308, 2014.

\bibitem{EppsteinKSS-2013}
D.~Eppstein, M.~van~Kreveld, B.~Speckmann \& F.~Staals.
\newblock Improved grid map layout by point set matching.
\newblock In {\em Proc. 6th IEEE PacificVis}, pages 25--32, 2013.

\bibitem{FeldmannW-2015}
A.~E.~Feldmann \& P.~Widmayer
\newblock An $O(n^4)$ time algorithm to compute the bisection width of solid grid graphs.
\newblock {\em Algorithmica}, 71(1):181--200, 2015.

\bibitem{FinkHNRSW-2013}
M.~Fink, H.~Haverkort, M.~N\"ollenburg, M.~Roberts, J.~Schuhmann \& A.~Wolff.
\newblock Drawing metro maps using {B}\'ezier curves.
\newblock In {\em Proc. 20th GD}, LNCS 7704, pages 463--474, 2013.

\bibitem{GareyJ-1977}
M.~R.~Garey \& D.~S.~Johnson.
\newblock The rectilinear {S}teiner tree problem is {NP}-Complete.
\newblock {\em SIAM J. Appl. Math.}, 32(4):826--834, 1977.

\bibitem{GoethemMSW-2015}
A.~van~Goethem, W.~Meulemans, B.~Speckmann \& J.~Wood.
\newblock Exploring curved schematization of territorial outlines.
\newblock {\em IEEE TVCG}, 21(8):889--902, 2015.

\bibitem{Hanan-1966}
M.~Hanan.
\newblock Steiner's problem with rectilinear distance.
\newblock {\em SIAM J. Appl. Math.}, 14(2):255--265, 1966.

\bibitem{HaunertB-2012}
J.-H.~Haunert \& B.~Budig.
\newblock An algorithm for map matching given incomplete road data.
\newblock In {\em Proc. 20th ACM SIGSPATIAL GIS}, pages 510--513, 2012.

\bibitem{HofPW-2009}
P.~van~'t~Hof, D.~Paulusma \& G.~J.~Woeginger.
\newblock Partitioning graphs into connected parts.
\newblock {\em Theoretical CS}, 410:4834--4843, 2009.

\bibitem{Johnson-1985}
D.~Johnson.
\newblock The NP-completeness column: an ongoing guide.
\newblock {\em Journal of Algorithms}, 6:1:145--159, 1985. 


\bibitem{Meulemans-2013}
W.~Meulemans.
\newblock Map matching with simplicity constraints.
\newblock {\em CoRR}, abs/1306.2827, 2013.

\bibitem{Meulemans-2014}
W.~Meulemans.
\newblock {\em Similarity Measures and Algorithms for Cartographic Schematization}.
\newblock PhD thesis, Eindhoven University of Technology, 2014.

\bibitem{Monmonier-1996}
M.~Monmonier.
\newblock {\em How to Lie with Maps}.
\newblock University of Chicago Press, 1996.


\bibitem{NoellenburgW-2011}
M.~N\"ollenburg \& A.~Wolff.
\newblock Drawing and labeling high-quality metro maps by mixed-integer
  programming.
\newblock {\em IEEE TVCG}, 17(5):626--641, 2011.

\bibitem{ParkP-1993}
J.~K.~Park \& C.~A.~Phillips.
\newblock Finding minimum-quotient cuts in planar graphs.
\newblock In {\em Proc. 25th STOC}, pages 766--776, 1993.


\bibitem{SheretteW-2013}
J.~Sherette \& C.~Wenk.
\newblock Simple curve embedding.
\newblock {\em CoRR}, abs/1303.0821, 2013.

\bibitem{SlingsbyDW-2010}
A.~Slingsby, J.~Dykes \& J. Wood. (2010). 
\newblock Rectangular hierarchical cartograms for socio-economic Data. 
\newblock {\em J. Maps}, 6(1):330--345, 2010.

\bibitem{WenkSP-2006}
C.~Wenk, R.~Salas \& D.~Pfoser.
\newblock Addressing the need for map-matching speed: Localizing global
  curve-matching algorithms.
\newblock In {\em Proc. 18th Int. Conf. on Sci. \& Stat. Database
  Management}, pages 379--388, 2006.

\bibitem{WylieZ-2014}
T.~Wylie \& B.~Zhu.
\newblock Intermittent map matching with the discrete {F}r\'echet distance.
\newblock {\em CoRR}, abs/1409.2456, 2014.

\bibitem{Zachariasen-2001}
M.~Zachariasen.
\newblock A catalog of {H}anan grid problems.
\newblock {\em Networks}, 38(2):76--83, 2001.

\end{thebibliography}
\end{document}